\newtheorem{theorem}{Theorem}
\newtheorem{corollary}{Corollary}[theorem]
\newtheorem{lemma}{Lemma}
\newtheorem*{lemma*}{Lemma}
\newtheorem{proposition}{Proposition}
\newtheorem{assumption}{Assumption}
\DeclareMathOperator*{\argmin}{argmin}
\title{\LARGE \bf
Optimality and Suboptimality of MPPI Control\\ in Stochastic and Deterministic Settings
}%
\author{Hannes Homburger$^{1}$, Florian Messerer$^{2}$,  Moritz Diehl$^{2}$, and Johannes Reuter$^{1}$
\thanks{This research was supported by DFG via projects 504452366 (SPP 2364) and 525018088, and by BMWK via 03EI4057A and 03EN3054B.}
\thanks{$^{1}$ Institute of System Dynamics, HTWG Konstanz - University of Applied Sciences, 78462 Konstanz, Germany
        {\tt\small \{hhomburg, jreuter\} @htwg-konstanz.de}}
    \thanks{$^{2}$ Department of Microsystems Engineering (IMTEK) and Department of Mathematics,
    	University of Freiburg, 79110 Freiburg, Germany
    	{\tt\small \{first.last\} @imtek.uni-freiburg.de}}%
}
\begin{document}

\maketitle
\thispagestyle{empty}
\pagestyle{empty}

\begin{abstract}
Model predictive path integral (MPPI) control has recently received a lot of attention, especially in the robotics and reinforcement learning communities.
 This letter aims to make the MPPI control framework more accessible to the optimal control community. We present three classes of optimal control problems and their solutions by MPPI. Further, we investigate the suboptimality of MPPI to general deterministic nonlinear discrete-time systems. Here, suboptimality is defined as the deviation between the control provided by MPPI and the optimal solution to the deterministic optimal control problem. Our findings are that in a smooth and unconstrained setting, the growth of suboptimality in the control input trajectory is second-order with the scaling of uncertainty. 
 The results indicate that the suboptimality of the MPPI solution can be modulated by appropriately tuning the hyperparameters. We illustrate our findings using numerical examples.


\end{abstract}

\vspace*{-0.10cm}
\section{Introduction and Overview}

Model predictive control (MPC) is an optimization-based state-feedback control technique that computes the control inputs concerning the state trajectory predicted over a finite horizon as the solution of an optimal control problem (OCP). To solve an OCP, often numerical methods are used that sequentially generate local first- or second-order models and then solve them efficiently \cite{Rawlings2017}. These approaches lead to very fast convergence rates and can be applied even to high-dimensional problems. However, they can converge to local minima and have issues in the presence of strong \mbox{nonlinearities}. In contrast, zero-order solvers based on sampling have fewer restrictions on the problem and can be executed in parallel but often lead to higher computational effort by ignoring the first- or second-order information. Recently, a sample-based method called Model Predictive Path Integral (MPPI) control has been investigated and applied in numerous publications, e.g. see the survey~\cite{Kazim2024}. Initially, this method inspired by the path integral framework was used to solve a special class of stochastic closed-loop OCPs \cite{Kappen2005,Theodorou2010}. Later it was extended based on the information-theoretic framework to general nonlinear systems with input noise -- however, based on an open-loop problem formulation \cite{Williams2018}. 
The certainty-equivalence principle \cite{Anderson2007} 
states that the solution for linear systems with independent noise and costs is independent of the uncertainty. However, in the literature, the uncertainty-aware MPPI method is often used for deterministic OCPs, and the suboptimality of the solution is tacitly ignored, see e.g. \cite{Zhang.2024,Homburger.2023,Yan.2024,Vlahov.2024}. 
In this context, noise is intentionally introduced for exploration, representing a fundamental distinction from classical stochastic optimal control, where noise arises from the environment. Recently, an MPPI method for deterministic OCPs was presented \cite{Halder2025}. 
\subsection{Contribution and Outline}
\noindent
We review the MPPI control framework from a stochastic optimal control perspective and show that the problem classes considered in the MPPI literature are stochastically equivalent to input-affine stochastic systems with noise on the input channel.  Subsequently, we investigate the suboptimality of standard MPPI to deterministic OCPs and characterize the problems. We introduce a scaling factor $\beta\in\mathbb{R}^+ $ that scales the standard deviation of the MPPI sampling distribution. We prove that for smooth and unconstrained problems, the suboptimality of the control trajectory is $\mathcal{O}( \beta^2 )$ and the suboptimality of the value function is $\mathcal{O}(\beta^4)$. Structure of this letter: in Section \ref{sec_ocp}, we define different classes of stochastic OCPs and discuss their solution via MPPI in Section~\ref{sec_MPPI}. The suboptimality is investigated in Section \ref{sec_sub} and discussed in Section~\ref{sec_dis}. In Section \ref{sec_exp}, we present numerical examples, followed by a summarizing Section~\ref{sec_con}.
\subsection{Notation and Preliminaries}
\noindent
The set of positive real numbers is 
$
\mathbb{R}^+ = \{ x \in \mathbb{R} \mid x > 0 \}$ and the set of non-negative real numbers is 
$
\mathbb{R}^+_0 = \{ x \in \mathbb{R} \mid x \geq 0 \}$.
 The concatenation of two column vectors $x\in\mathbb{R}^{n_x}$ and $y\in\mathbb{R}^{n_y}$ is denoted by $(x,y):=\left[x^\top,y^\top \right]^\top\in\mathbb{R}^{n_x+n_y}$. 
The partial derivative with respect to a variable $x$ is represented as $\frac{\partial}{\partial x}$, indicating differentiation with respect to the explicit argument of the function. Arguments of a function may be specified using subscripts or parentheses. For instance, the function $f_\beta(x)$ has two arguments, $\beta$ and $x$.
Gradients of functions are denoted by 
    $\nabla_x f(x) = \frac{\partial}{\partial x} f(x)^\top,$
representing the vector of partial derivatives with respect to $x$ and $\nabla^2_x f(x)$ denotes the Hessian.
The multivariate normal distribution of a random variable $v \in \mathbb{R}^{n} $  is expressed as 
$ v \sim \mathcal{N}(\mu, \Sigma),$
where $\mu \in \mathbb{R}^n$ is the mean, and $\Sigma \in \mathbb{R}^{n \times n}$ is the covariance matrix. 
The expectation of a random variable $W\in\mathbb{R}^{n}$ sampled from $\mathbb{Q}_\beta$ defined by its probability density function $q_\beta(W)$, is \vspace*{-0.1cm}
\begin{equation*}
    \mathbb{E}_{W \sim \mathbb{Q}_\beta}[W] = \int_{\mathbb{R}^{n}} W q_\beta(W) \, dW.
\end{equation*}

For a function \(f : \mathbb{R}^n \to \mathbb{R}^m\), we write \(f(x) = \mathcal{O}(g(x))\) if and only if there exists a constant \(C \in \mathbb{R}^+\) and a neighborhood \(\mathcal{A}\) of 0 such that $
\forall x \in \mathcal{A}: \|f(x)\| \leq C g(x).$
\\

\section{Stochastic Optimal Control Problems}\label{sec_ocp}
We consider the stochastic nonlinear discrete-time system
\begin{equation*}
    x^+ = f(x,u,w),
\end{equation*}
where the state is $x\in\mathbb{R}^{n_x}$,  the control is $u\in\mathbb{R}^{n_u}$, the random disturbance is $w\in\mathbb{R}^{n_w}$, and without loss of generality $w\sim\mathcal{N}(0,I)$ with unit matrix $I$. Given an initial state $\overline x_0\in\mathbb{R}^{n_x}$, an input trajectory $U=(u_0,u_1,...,u_{N-1})$, and a noise trajectory $W=(w_0,w_1,...,w_{N-1})$, the corresponding state trajectory is determined by
\begin{align}
    x_0(U,W,\overline x_0)&=\overline x_0,\label{eq_x_init}\\
    x_{k+1}(U,W,\overline x_0)&=f(x_k(U,W,\overline x_0),u_k,w_k),\label{eq_x}
\end{align}
 using forward simulation for $k=0,1,...,N-1$.
The performance of a trajectory can be evaluated by the overall cost
\begin{equation}\label{eq_overall_cost_init}
    J_{\overline x_0}(U,W):= \sum_{k=0}^{N-1} L(  x_k(\cdot ),u_k)  + E( x_N(\cdot )) ,
\end{equation}
where the states' dependencies given in \eqref{eq_x_init} and \eqref{eq_x} are omitted for readability, the stage cost function is $L:\mathbb{R}^{n_x}\times \mathbb{R}^{n_u}\rightarrow\mathbb{R}$, and $E:\mathbb{R}^{n_x}\rightarrow\mathbb{R}$ is the terminal cost function. 
\subsection{Closed-Loop Stochastic Optimal Control}
\noindent
The ideal problem is to minimize the expected overall cost for the closed-loop system. This means that each control $u_k$ is selected with respect to the state $x_k(\cdot )$. The key point is that in $x_k(\cdot )$ all previous realizations of disturbances $w_i$ and control actions $u_i$ for $i=0,..,k-1$ are included. This problem can be formalized in the recursive form 
\begin{equation}\label{eq_CL_OCP}
    V_\mathrm{cls}(\overline x_0) = \min_{u_0}\mathbb{E}_{w_0\sim \mathcal{N}(0,\Sigma)}
    ...\min_{u_{N-1}}\mathbb{E}_{w_{N-1}\sim \mathcal{N}(0,\Sigma)} \;J_{\overline x_0}(U,W),
\end{equation}
that we call the \textit{closed-loop stochastic optimal control problem} (CLS-OCP). Note that this problem typically corresponds to optimization over policies because it reacts to disturbances in an optimal way. 
\subsection{Open-Loop Stochastic Optimal Control}
\noindent
Because the CLS-OCP is typically hard to solve, it is sometimes reasonable to optimize the expected overall cost for the open-loop system instead. This means that the control inputs $u_k$ are chosen independent of the previous realization of the disturbances, but in an optimal way concerning the prior known distribution of the disturbances. This problem can be formalized as 
\begin{equation}\label{eq_SOCP}
   V_\mathrm{ols}(\overline x_0) = \min_{U} \; \mathbb{E}_{W\sim \mathcal{N}(0,\overline \Sigma)} \left[ J_{\overline x_0}(U,W) \right],
\end{equation}
that we call the \textit{open-loop stochastic optimal control problem} (OLS-OCP), where $\overline \Sigma=\mathrm{diag}(\Sigma,\Sigma,...,\Sigma)\in\mathbb{R}^{Nn_w\times Nn_w}$ is the mutual covariance matrix for a whole trajectory $W$ with uncorrelated $w_k$. Usually, the solution to this problem requires less effort, because the optimization concerns only one input trajectory. However, when applied in an MPC fashion, the controller's capability to react to realized disturbances is not modeled, leading to a suboptimal solution for the closed-loop system.
\subsection{Deterministic Optimal Control}
\noindent
In this letter, we investigate the suboptimality of MPPI for solving the deterministic OCP represented by the nonlinear program
\begin{equation}\label{eq_DOCP}
   V_\mathrm{det}(\overline x_0)=\min_{U}\; J_{\overline x_0}(U,0),
\end{equation}
where no uncertainty is considered. We call this problem the \textit{deterministic optimal control problem} (DET-OCP). 
If the computation power is limited or if there is no model of the model's uncertainty, practitioners often use this \textit{nominal} setting. Note that in a smooth and unconstrained setting the DET-OCP solution and the CLS-OCP solution are similar for systems with small uncertainties \cite{Messerer2024}.  
\section{MPPI Solution Methods}\label{sec_MPPI}
In this section, we review the MPPI methods for the OCPs stated in the previous section and we show that these methods are inherently restricted to the overall cost structure \vspace*{-0.15cm}
\begin{equation}\label{eq_overall_cost}
    J_{\overline{x}_0}(U,W) = S_{\overline{x}_0}(U+W)+\frac{1}{2}\lVert U\rVert_2^2,  
\end{equation}
 where the \textit{path costs} with $c:\mathbb{R}^{n_x}\rightarrow \mathbb{R}_0^+$ and $V:=U+W$ are 
 \vspace*{-0.15cm}
\begin{equation}\label{eq_path_cost}
    S_{\overline x_0}(V) := E(  x_N(V,0,\overline x_0))+\sum_{k=0}^{N-1} c(  x_k(V,0,\overline x_0)).
\end{equation}  
Thus, the stage cost in \eqref{eq_overall_cost_init} is restricted to the sum of a generic term dependent only on the state and a quadratic control cost.
\subsection{MPPI Control for the CLS-OCP}\label{sec_MPPI_CLSOCP}
\noindent 
We consider the class of CLS-OCPs, where the discrete-time system dynamics is restricted to input-affine systems
\begin{equation}\label{eq_affin_system}
    x^+ = \tilde f(x)+B(x)u+G(x)w,
\end{equation}
 with $B:\mathbb{R}^{n_x}\rightarrow \mathbb{R}^{n_x \times n_u}$, $G:\mathbb{R}^{n_x}\rightarrow \mathbb{R}^{n_x\times n_w}$, $w\sim\mathcal{N}(0,I)$, and the stage cost \vspace*{-0.5cm} \begin{equation}\label{eq_stage_cost}
    L(x,u) = c(x) +\frac{1}{2} u^\top R(x)u
\end{equation}
 with $R:\mathbb{R}^{n_x}\rightarrow \mathbb{R}^{n_u\times n_u}$, $R\succ 0$. The main theoretical results are derived under the following strong assumption. We show that this assumption can interpreted as effectively requiring that the noise enters additively on the control.
\begin{assumption} \label{assumption}
There exists a $\lambda\in\mathbb{R}^+$ such that
    \begin{equation} \label{eq_assumption}\lambda B(x)R(x)^{-1}B(x)^\top=G(x)G(x)^\top. \end{equation}
\end{assumption}
\begin{proposition}\label{proposition}
    We assume the dynamics \eqref{eq_affin_system}, \mbox{$w$ i.i.d.,} the stage cost \eqref{eq_stage_cost}, and Assumption~\ref{assumption} holds. Then an equivalent OCP exists, which is specified by the dynamics
   \vspace*{-0.1cm} \begin{equation}\label{eq_stoch_eq}
    x^+ = \tilde f(x)+\overline B(x)(\overline u+ \overline w),
\end{equation}
and the corresponding overall cost \eqref{eq_overall_cost}.  
\end{proposition}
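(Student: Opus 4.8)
The plan is to construct the equivalent OCP explicitly by a state‑dependent, invertible reparametrization of the control, and then to check that the nested minimizations and expectations in \eqref{eq_CL_OCP} (equivalently, the dynamic programming recursion) are left invariant. Since $R(x)\succ 0$ for every $x$, the symmetric positive‑definite square root $R(x)^{1/2}$ and its inverse $R(x)^{-1/2}$ are well defined, so I would introduce the new control $\overline u := R(x)^{1/2}u$, equivalently $u = R(x)^{-1/2}\overline u$, together with $\overline B(x) := B(x)R(x)^{-1/2}$. Substituting into \eqref{eq_affin_system} replaces the control term $B(x)u$ by $\overline B(x)\overline u$, and substituting into the stage cost \eqref{eq_stage_cost} replaces $\tfrac12 u^\top R(x)u$ by $\tfrac12\|\overline u\|_2^2$, while $c(x)$ and $E$ are untouched. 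Thus the transformed problem already has an overall cost of the form \eqref{eq_overall_cost} with path cost \eqref{eq_path_cost} built from $\overline B$ — once the disturbance term is rewritten consistently, which is exactly where Assumption~\ref{assumption} is needed.

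The key step is matching the additive noise. Conditioned on the current state $x$, the disturbance $G(x)w$ with $w\sim\mathcal{N}(0,I)$ is a zero‑mean Gaussian with covariance $G(x)G(x)^\top$, and Assumption~\ref{assumption} gives $G(x)G(x)^\top = \lambda\, B(x)R(x)^{-1}B(x)^\top = \lambda\,\overline B(x)\overline B(x)^\top$. Since a zero‑mean Gaussian is determined by its covariance, $G(x)w$ has the same law as $\overline B(x)\overline w$ with $\overline w\sim\mathcal{N}(0,\lambda I)$; in particular their supports agree even though $G(x)$ and $\overline B(x)$ need not have the same number of columns. Hence the one‑step transition law $x^+\mid(x,u)$ of \eqref{eq_affin_system} coincides with that of the system \eqref{eq_stoch_eq} under the identification $\overline u = R(x)^{1/2}u$ and noise covariance $\Sigma = \lambda I$.

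It then remains to propagate this equivalence through the OCP. I would argue by backward induction on the dynamic programming recursion with $V_N = E$ and
\[ V_k(x) = \min_{u}\Big[ L(x,u) + \mathbb{E}_{w\sim\mathcal{N}(0,I)}\big[\,V_{k+1}(f(x,u,w))\,\big]\Big]. \]
The substitutions above turn this recursion into the one for the problem with dynamics \eqref{eq_stoch_eq}, stage cost $c(x)+\tfrac12\|\overline u\|_2^2$ and $\overline w\sim\mathcal{N}(0,\lambda I)$, so $V_k$ is shared by both problems and the minimizers are related by $\overline u_k^\star = R(x_k)^{1/2}u_k^\star$; applying the same substitutions directly to the nested operators in \eqref{eq_CL_OCP}, or to the single expectation in \eqref{eq_SOCP}, handles the closed‑loop and open‑loop formulations uniformly. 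Because the reparametrization is, for each fixed state, a bijection of the control space that depends only on the observed state, it maps admissible policies (resp. admissible open‑loop sequences) to admissible ones and preserves optimality, which establishes the claimed equivalence.

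I expect the only real obstacle to be making the noise‑matching airtight in the degenerate cases: $G(x)$ and $\overline B(x)$ generally have different shapes, so one must argue through covariances rather than channel by channel, using that Assumption~\ref{assumption} forces $G(x)G(x)^\top$ and $\overline B(x)\overline B(x)^\top$ to be proportional (hence to have equal range). It is also worth spelling out the interpretive claim preceding the proposition: Assumption~\ref{assumption} is precisely the condition under which, after the reparametrization, the disturbance enters through the same matrix $\overline B(x)$ as the control, i.e.\ ``additively on the control''.
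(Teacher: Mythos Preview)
Your proposal is correct and follows essentially the same approach as the paper: the same reparametrization $\overline B(x)=B(x)R(x)^{-1/2}$, $\overline u=R(x)^{1/2}u$, and the same covariance-matching argument $G(x)\mathcal{N}(0,I)=\mathcal{N}(0,G(x)G(x)^\top)=\mathcal{N}(0,\lambda\overline B(x)\overline B(x)^\top)=\overline B(x)\mathcal{N}(0,\lambda I)$ yielding $\overline w\sim\mathcal{N}(0,\lambda I)$. The paper's proof stops there and simply declares the problems equivalent, whereas you additionally spell out the backward-induction argument on the DP recursion and the degenerate-shape caveat for $G$ versus $\overline B$; this is extra rigor rather than a different route.
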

 \begin{proof}
     Substituting $ B(x)=\overline B(x)R(x)^{\frac{1}{2}}$ into \eqref{eq_assumption}, we obtain $\lambda \overline B(x)\overline B(x)^\top=G(x)G(x)^\top $ and note that $G(x) \mathcal{N} (0, I) = \mathcal{N} \left( 0, G(x)G(x)^\top \right) = \mathcal{N}\left(0, \lambda\bar B(x)  \bar B(x)^\top\right) = \bar B(x) \mathcal{N} (0, \lambda I)$. With $ u=R(x)^{-\frac{1}{2}}\overline u$, the dynamics \eqref{eq_stoch_eq} with $\overline w\sim\mathcal{N}(0,\lambda I)$ and the overall cost ~\eqref{eq_overall_cost} specify an equivalent problem. 
 \end{proof} 
\noindent For this class of CLS-OCPs in a continuous-time setting, a closed-form solution can be derived from an exponential transformation of the corresponding stochastic Hamilton-Jacobi-Bellman (HJB) equation, resulting in a linear second-order partial differential equation (PDE) \cite{Kappen2005}. To obtain a linear second-order PDE, the continuous-time version of Assumption~\ref{assumption} must hold \cite[Sec. II.A]{Williams2017}. This PDE can be evaluated as an expectation via the Feynman-Kac lemma \cite[Thm. 8.2.1]{Oksendal.2003}. While the theoretical foundations are based on a continuous-time setting, to reach a discrete-time solution, the Euler–Maruyama method can be applied as implicitly proposed in \cite[Sec.~IV]{Williams2018}. 
 The obtained dynamics \eqref{eq_stoch_eq} are presented as a special case in \cite[Sec.~IV]{Williams2018}, where the solution for the CLS-OCPs is given in compact form as 
\begin{equation*}
    V_\mathrm{cls}(\overline x_0)=-\lambda \log \mathbb{E}_{W\sim \mathcal{N}( 0,\overline \Sigma)}\left[ \exp\left( -\frac{1}{\lambda} S_{\overline x_0}(W) \right) \right]
\end{equation*}
 of weighted uncontrolled trajectories, and the optimal control for $k=0$ is given by
\begin{equation}\label{eq_u_star_cl}
    u^\star_{0,\mathrm{cls}}(\overline x_0)=\frac{\mathbb{E}_{W\sim \mathcal{N}( 0,\overline \Sigma)} \left[ \exp\left( -\frac{1}{\lambda} S_{\overline x_0}(W)\right) w_0  \right]}{\mathbb{E}_{W\sim \mathcal{N}( 0,\overline \Sigma)} \left[ \exp\left( -\frac{1}{\lambda} S_{\overline x_0}(W) \right) \right]}.
\end{equation}
 Note the similarity between \eqref{eq_u_star_cl} and the \textit{softargmin} function, which is well known in the field of machine learning.  
\subsection{MPPI Control for the OLS-OCP}
\noindent 
Now we consider a more general class of nonlinear discrete-time systems with additive noise on the input, given by
\begin{equation}\label{eq_nonlinear_dynamics}
    x^+ =  f(x,u+w,0),\;\;\;\; w\sim\mathcal{N}(0,\Sigma),
\end{equation}
 quadratic costs on the inputs \eqref{eq_stage_cost}, and assumption $\lambda R^{-1}=\Sigma$.
According to~\cite[Sec.~3.A]{Williams2018}, the optimal control sequence $U^\star_\mathrm{ols}$ of the OLS-OCP \eqref{eq_SOCP} can be approximated by 
\begin{equation}\label{eq_KL}
    U^\star_\mathrm{ols} \approx \tilde U^\star_\mathrm{MPPI}:=  \argmin_{U} \mathbb{E}_{W \sim \mathbb{Q}^\star}\left[ \log \left( \frac{q^\star(W)}{q_U(W)}\right) \right], 
\end{equation}
sometimes called \textit{information-theoretic} optimum.
Here, $q_U(W)$ is the probability density function (PDF) of the normal distribution $\mathcal{N}(U,\overline \Sigma)$  and the \textit{optimal distribution} $\mathbb{Q}^\star$ is defined by its PDF
\begin{equation}
\label{eq_opt_dis}
    q^\star(W):=\frac{1}{\eta} \exp \left( -\frac{1}{\lambda} S_{\overline x_0} (W) \right) p_{\overline{\Sigma}}(W),
\end{equation}
where $\eta\in\mathbb{R}^+$ is a normalization factor and  $p_{\overline{\Sigma}}(W)$ is the PDF of the normal distribution $\mathcal{N}(0,\overline \Sigma)$. Note that the objective in \eqref{eq_KL} is the \textit{Kullback–Leibler divergence} \cite[Ch. 3]{Goodfellow.2016} of the optimal distribution $\mathbb{Q}^\star$ and the controlled distribution $\mathbb{Q}_{U}$. In contrast to the presentation in \cite[Eq. (3) and (16)]{Williams2018}, as postulated in \eqref{eq_KL}, MPPI only provides an approximation to the OLS-OCP solution, although the \textit{optimal distribution} \cite[Sec.~3]{Williams2018} establishes a lower bound on the value function of the OLS-OCP \cite{Theodorou.2015}. This follows from the fact that in general, the \textit{optimal distribution} cannot be reached by $\mathbb{Q}_U$. Consequently, the relationship between $U^\star_\mathrm{ols}$ and $\tilde U^\star_\mathrm{MPPI}$ should be interpreted as a heuristic, and it is not clear how suboptimal the relation is.  
The minimizer of \eqref{eq_KL} is given by
\begin{equation}\label{eq_u_star_ol}
    \tilde u^\star_{k,\mathrm{MPPI}}(\overline x_0)=\frac{\mathbb{E}_{W\sim \mathcal{N}( 0,\overline \Sigma)}\left[ \exp\left( -\frac{1}{\lambda} S_{\overline x_0}(W)  \right) w_k\right]}{\mathbb{E}_{W\sim \mathcal{N}( 0,\overline \Sigma)}\left[ \exp\left( -\frac{1}{\lambda} S_{\overline x_0}(W) \right) \right]}
\end{equation}
for $k=0,1,...,N-1 $ \cite[Sec.~3.C]{Williams2018}. In contrast to the presentation in \cite[Sec. 4]{Williams2018}, the equivalence of \eqref{eq_u_star_cl} and \eqref{eq_u_star_ol} for $k=0$ implies an interpretation of \eqref{eq_KL} as an approximation of the CLS-OCP. Here, suboptimality presumably depends on a metric of how strongly the nonlinear dynamics \eqref{eq_nonlinear_dynamics} deviate from the input-affine case \eqref{eq_affin_system}. 
\subsection{MPPI Control for the D-OCP}
\noindent 
However, there is a special case in which MPPI via the \textit{information-theoretic} framework converges to an exact solution:  consider the general nonlinear and deterministic system
\begin{equation*}
    x^+ =f(x,u,0)
\end{equation*}
and stage costs given in the form of \eqref{eq_stage_cost} containing quadratic costs on the control inputs and arbitrary state-dependent costs.
The optimal deterministic input trajectory is given by 
\begin{equation}\label{eq_det_OCP}
   U^\star_\mathrm{det}= \argmin_{U} J_{\overline x_0}(U,0)
\end{equation}
and the $k^\mathrm{th}$ element of this solution is denoted by $u^\star_{k,\mathrm{det}}(\overline x_0)$.
By introducing the scaling parameter $\beta\in\mathbb{R}^+$ and replacing $\lambda$ and $\overline \Sigma $ with $ \beta^2 \lambda $ resp. $ \beta^2 \overline \Sigma$ in \eqref{eq_u_star_ol}, we obtain
\begin{align}
    \tilde u^\star_{k,\mathrm{MPPI}}\left(\overline x_0,\beta\right)&=\frac{\mathbb{E}_{W\sim \mathcal{N}\left( 0,\beta^2 \overline \Sigma\right)}\left[ \exp\left( -\frac{1}{\beta^2\lambda} S_{\overline x_0}(W) \right) w_k \right]}{\mathbb{E}_{W\sim \mathcal{N}\left( 0,\beta^2\overline \Sigma\right)}\left[ \exp\left( -\frac{1}{\beta^2\lambda} S_{\overline x_0}(W) \right) \right]},\notag \\
    &=\mathbb{E}_{W\sim \mathbb{Q}^\star\left(\beta\right)}\left[  w_k  \right],\label{eq_u_star}
\end{align}
and in the limit of shrinking uncertainty this expression with
\begin{equation}\label{eq_u_det_OCP}
    u^\star_{k,\mathrm{det}}(\overline x_0)=\lim_{\beta\rightarrow 0}  \tilde u^\star_{k,\mathrm{MPPI}}(\overline x_0,\beta)
\end{equation}
 provides an exact solution to the DET-OCP for \mbox{$k=0,...,N-1$}.
 Note that the noise is intentionally introduced for exploration and is not part of the environment.
 The statement \eqref{eq_u_det_OCP} is proven in Thm.~1 in Section~\ref{sec_sub}. 

\begin{algorithm} 
    \caption{Deterministic MPPI solving \eqref{eq_u_det_OCP}}  \label{alg:DeterministicSTLMPPISolver}
    {\small
	\begin{algorithmic}[1]
        \setlength{\itemsep}{1pt}  
        \Require Initial system state $\overline x_0$, initial control sequence $\hat U$, number of iterations $I$, number of trajectory samples  $M$, shrinking factor $\nu$, initial covariance $\overline \Sigma_0$, initial temperature $\lambda_0$
        \Ensure  Optimal input trajectory $U_\mathrm{det}^\star$
        \For {$j \in \{0, 1,  \dots, I-1\}$} \label{eq:Det_MPPI_Shrink_Loop_Start}
            \State $\beta \gets \nu^j$; $\lambda \gets \beta^2\lambda_0$;  $\overline \Sigma \gets \beta^2 \overline \Sigma_0$  \label{eq:Det_MPPI_Shrinking} \Comment{reducing uncertainty} 
            \For {$m\in \{0, 1, \dots, M-1\}$ }  \Comment{in parallel} \label{eq:Det_MPPI_MPPI_Loop_Start}
                \State $W^m \sim \mathcal{N}\left(0,\overline \Sigma\right) $
                \State $S^m \gets S_{\overline x_0}\left(\hat U+W^m\right) + \lambda {W^m}^\top \overline{\Sigma}^{-1} \hat U $;  \label{eq:cost_with_importance_weighting} \Comment{2$^{\mathrm{nd}}$ term is correction because  $\hat U \neq 0$; see Section~\ref{sec_Det_MPPI_Algo}}
            \EndFor \label{Det_MPPI_MPPI_Loop_End}
            \State $\psi \gets \min_{m \in \{0, 1, \dots , M-1\}} S^m$ \Comment{offset only for numerics}\label{eq:Det_MPPI_MPPI_Weighting_Start}
            \State $\eta \gets \sum_{m=0}^{M-1} \exp (-\frac{1}{\lambda}(S^m - \psi))$
            \For {$m \in \{0, 1, \dots, M-1\}$} \Comment{in parallel}
                \State $\omega^m \gets \frac{1}{\eta} \exp (-\frac{1}{\lambda}(S^m - \psi))$\Comment{compute weights}
            \EndFor
            \For {$k\in\{0, 1, \dots, K-1\}$}
                \State $\hat u_k \gets \hat u_k + \sum_{m=1}^{M} \omega^m  w_k^m $ \Comment{solution to \eqref{eq_u_star}}
            \EndFor \label{eq:Det_MPPI_MPPI_Weighting_End}
        \EndFor
    \State \Return $U_\mathrm{det}^\star \gets \hat U$ \Comment{solution to \eqref{eq_u_det_OCP}}
    \end{algorithmic}}
\end{algorithm}

\subsection{MPPI Control Algorithms}\label{sec_Det_MPPI_Algo}
The expectation operator regarding the distribution of the uncontrolled system is part of the MPPI solution to the CLS-OCP \eqref{eq_u_star_cl}, the OLS-OCP \eqref{eq_u_star_ol}, and the DET-OCP \eqref{eq_u_det_OCP}. 
MPPI algorithms typically approximate this expectation operator using Monte Carlo estimation \cite{Kazim2024}. However, the sample efficiency using the uncontrolled distribution is strongly limited. To increase sample efficiency with \textit{importance sampling} the sample distribution is changed to a \textit{proposal distribution}. In the same course, a correction term must be added in the expectation to preserve its value. To change the mean of the normal distribution from $0$ to $\hat U$ and sample from $\mathcal{N}(\hat U,\overline \Sigma)$ instead of $\mathcal{N}(0,\overline \Sigma)$, a correction term presented in \cite[Sec. 3.C]{Williams2018} is added to Line \ref{eq:cost_with_importance_weighting} of Algo.~\ref{alg:DeterministicSTLMPPISolver}, which is adapted from \cite{Halder2025}. Further, for $I=1$, Algo.~\ref{alg:DeterministicSTLMPPISolver} reduces to standard MPPI presented in \cite{Williams2018} computing \eqref{eq_u_star_cl} resp. \eqref{eq_u_star_ol}.
It is important to note that importance sampling with this correction term does not change the value of the expectation \cite[Sec.~3.C]{Williams2018}. The suboptimality caused by the Monte Carlo estimation with a finite number of samples is investigated in \cite{Yoon.2022,Patil.2024,Shapiro2009} and is not part of this letter. 

\section{Investigation of Suboptimality}\label{sec_sub}
In this section, we present and prove Thm.~\ref{thm_main}, the principal result of this paper. It addresses the convergence of the deterministic MPPI method and establishes the order of suboptimality. These findings are employed in Corollary~\ref{cor} to give an order of suboptimality on the corresponding value function. Before stating Thm.~~\ref{thm_main}, we briefly introduce Lemma~\ref{lemma_laplace} that will be crucial later. 
\begin{lemma}[Erdélyi’s formulation of Laplace's classical method, adapted from {\cite[Thm. 1.1]{Nemes.2013}}]\label{lemma_laplace}
For the integral \vspace*{-0.15cm}
\[
I(\lambda) = \int_{a}^{b} \exp \left( -\lambda f(x)\right) g(x) \, \mathrm{d}x,
\]
where $(a,b)\subseteq \mathbb{R}$ is a real interval, which may be finite or infinite, we assume that
\begin{itemize}
    \item[(i)] \( f(x) > f(a) \) for all \( x \in (a, b) \), for every \( \delta > 0 \) the infimum of \( f(x) - f(a) \) in \( [a + \delta, b) \) is positive;
    \item[(ii)] \(f\) and \(g\) are scalar real analytic functions; and
    \item[(iii)] \textit{the integral} \( I(\lambda) \) \textit{converges absolutely for all sufficiently large} \( \lambda \).
\end{itemize}

\noindent Then
\vspace*{-0.45cm}
\[
I(\lambda) \equiv \exp \left( -\lambda f(a)\right) \sum_{n=0}^{\infty} \tilde c_n\lambda^{-\frac{n}{2}},
\]
where the coefficients \( \tilde c_n \) are real values and the symbol $\equiv$ denotes that the quotient of the left-hand side by the right-hand
side approaches $1$ as \( \lambda \to +\infty \).
\end{lemma}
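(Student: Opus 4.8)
The plan is to recover this statement as a specialization of the rigorous form of Erdélyi's asymptotic expansion in \cite[Thm.~1.1]{Nemes.2013}: the analyticity hypothesis (ii) supplies the local Puiseux data that this expansion needs, hypothesis (i) forces the contribution away from $a$ to be exponentially negligible, and hypothesis (iii) makes the remaining manipulations licit. \emph{Step~1 (localization).} By (iii) fix $\lambda_0$ with $\int_a^b |g(x)|\,\mathrm e^{-\lambda_0 f(x)}\,\mathrm{d}x<\infty$. For any $\delta>0$, hypothesis (i) gives $c_\delta:=\inf_{[a+\delta,b)}\big(f(x)-f(a)\big)>0$, so for $\lambda>\lambda_0$
\[
\Big|\int_{a+\delta}^{b}\mathrm e^{-\lambda f(x)}g(x)\,\mathrm{d}x\Big|\;\le\;\mathrm e^{-(\lambda-\lambda_0)(f(a)+c_\delta)}\int_{a}^{b}|g(x)|\,\mathrm e^{-\lambda_0 f(x)}\,\mathrm{d}x\;=\;\mathrm e^{-\lambda f(a)}\,\mathcal O\!\big(\mathrm e^{-(\lambda-\lambda_0)c_\delta}\big),
\]
which is beyond all algebraic orders in $\lambda$ and may be discarded; it therefore suffices to expand $\int_{a}^{a+\delta}\mathrm e^{-\lambda f}g\,\mathrm{d}x$ for small $\delta$.

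\emph{Step~2 (analytic normal form).} Since $f$ is real analytic at $a$ and, by (i), nonconstant with $f(x)>f(a)$ on $(a,b)$, its order of vanishing at $a$ is a finite positive integer $\mu$, i.e. $f(x)-f(a)=(x-a)^{\mu}h(x)$ with $h$ analytic and $h(a)>0$; here $\mu=1$ when $f'(a)\neq 0$, and in the nondegenerate situation relevant to the application — where $f$ later plays the role of a cost with positive definite Hessian at its minimizer — one has $\mu=2$. Likewise $g(x)=(x-a)^{\nu-1}\tilde g(x)$ with $\tilde g$ analytic and $\nu$ a positive integer ($\nu=1$ if $g(a)\neq 0$). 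The substitution $t=f(x)-f(a)$ is then an analytic change of variables on $[a,a+\delta]$ with inverse of Puiseux type $x-a=t^{1/\mu}\varphi(t^{1/\mu})$, $\varphi$ analytic with $\varphi(0)>0$, so that $\int_a^{a+\delta}\mathrm e^{-\lambda f}g\,\mathrm{d}x=\mathrm e^{-\lambda f(a)}\int_0^{\tau}\mathrm e^{-\lambda t}G(t)\,\mathrm{d}t$, where $G(t)=g(x(t))\,x'(t)$ admits the convergent local expansion $G(t)\sim\sum_{n\ge 0}\gamma_n\,t^{(n+\nu)/\mu-1}$.

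\emph{Step~3 (Watson's lemma with analytic data).} Integrating this expansion term by term gives $\sum_{n\ge 0}\gamma_n\,\Gamma\!\big(\tfrac{n+\nu}{\mu}\big)\lambda^{-(n+\nu)/\mu}$, and Taylor's theorem with remainder — using analyticity to bound the tail of $G$ uniformly on $[0,\tau]$ — shows that the $M$-th remainder of the integral is $\mathcal O\!\big(\lambda^{-(M+\nu)/\mu}\big)$; this is precisely the Poincaré-type conclusion ``$\equiv$'' with ratio $\to 1$, and the coefficients are real because $f$ and $g$ are. Finally, with $\nu=1$ and $\mu\in\{1,2\}$ the exponent set $\{(n+1)/\mu:n\ge 0\}$ is contained in the half-integer lattice $\{m/2:m\ge 0\}$; re-indexing and declaring the absent coefficients zero puts the expansion in the stated form $\mathrm e^{-\lambda f(a)}\sum_{m\ge 0}\tilde c_m\lambda^{-m/2}$.

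I expect the main obstacle to be Step~2: justifying that $t=f(x)-f(a)$ is a legitimate analytic substitution yielding the claimed Puiseux structure — i.e. that analyticity together with the strict-minimum condition (i) forces a finite integer order of vanishing and an analytic local inverse in $t^{1/\mu}$ — and pinning down which $(\mu,\nu)$ actually occur, since the ambient scale $\lambda^{-m/2}$ is the correct one only under the tacitly assumed (and, in the paper's application, valid) nondegeneracy $\mu\le 2$. Once this normal form is established, Steps~1 and 3 are routine, and the cleanest route is to invoke \cite[Thm.~1.1]{Nemes.2013} directly, whose hypotheses are met by (i)–(iii) and whose explicit coefficient formula then provides the $\tilde c_n$.
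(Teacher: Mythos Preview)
The paper does not prove this lemma at all: it is stated as a direct adaptation of \cite[Thm.~1.1]{Nemes.2013} and is simply cited as a tool, with no accompanying argument. Your proposal, by contrast, sketches an actual proof along the classical lines---localization via (i) and (iii), an analytic change of variable $t=f(x)-f(a)$ exploiting (ii), and Watson's lemma---which is essentially how the result is established in the cited reference. So your approach is not so much different from the paper's as it is strictly more: you are reconstructing the content of the citation rather than invoking it. Your own closing remark, that ``the cleanest route is to invoke \cite[Thm.~1.1]{Nemes.2013} directly,'' is in fact exactly what the paper does.

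One substantive point you flag is worth emphasizing: the expansion in half-integer powers $\lambda^{-n/2}$ as stated in the lemma is not the fully general conclusion of Erd\'elyi--Nemes, which would give powers $\lambda^{-(n+\nu)/\mu}$ for arbitrary vanishing orders $\mu$ of $f-f(a)$ and $\nu$ of $g$. The paper's formulation has tacitly specialized to the case $\mu\le 2$ (and $\nu=1$), which is exactly what is needed downstream in the proof of Theorem~\ref{thm_main}, where SOSC at the minimizer forces $\mu=2$. Your identification of this as the main technical wrinkle is accurate, and your handling of it---noting that the application supplies the nondegeneracy---matches how the paper implicitly uses the lemma.
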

\begin{theorem}\label{thm_main}
Assume $J_{\overline x_0}:\mathbb{R}^{N n_u}\times \mathbb{R}^{N n_u} \rightarrow \mathbb{R}^+_0$ is continuously differentiable, has a unique global minimum  $J_{\overline x_0}(U^\star_\mathrm{det}\;,0)$, and second-order sufficient condition (SOSC) at $U^\star_\mathrm{det}$. 
Then 
\begin{equation}\label{eq_thm1}
\lim_{\beta \rightarrow 0}\tilde U^\star_\mathrm{MPPI}(\beta) = U^\star_\mathrm{det}
\end{equation}
 and for small $\beta > 0$ the bias of the solution shrinks quadratically with the uncertainty,
\begin{equation}
\lVert \tilde U^\star_\mathrm{MPPI}(\beta) -U^\star_\mathrm{det} \rVert = \mathcal{O}\left(\beta^2 \right).\label{eq_thm2}
\end{equation}
\end{theorem}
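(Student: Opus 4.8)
The plan is to recognize $\tilde U^\star_{\mathrm{MPPI}}(\beta)$ as the barycenter of a Gibbs measure at temperature $\beta^2\lambda\to0$ and to apply Laplace's method. In \eqref{eq_u_star} the density of $\mathbb{Q}^\star(\beta)$ is proportional to $\exp(-\tfrac1{\beta^2\lambda}S_{\overline x_0}(W))\,p_{\beta^2\overline\Sigma}(W)$; since $\overline\Sigma=\lambda I$, the Gaussian factor contributes the exponent $-\tfrac1{2\beta^2\lambda}\lVert W\rVert_2^2$, which combines with $-\tfrac1{\beta^2\lambda}S_{\overline x_0}(W)$ into $-\Lambda\,J_{\overline x_0}(W,0)$. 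Hence, up to a $W$-independent normalizer,
\[
\tilde U^\star_{\mathrm{MPPI}}(\beta)=\mathbb{E}_{W\sim\mathbb{Q}^\star(\beta)}[W]=\frac{\displaystyle\int_{\mathbb{R}^{Nn_u}} W\,e^{-\Lambda\, h(W)}\,\mathrm{d}W}{\displaystyle\int_{\mathbb{R}^{Nn_u}} e^{-\Lambda\, h(W)}\,\mathrm{d}W},\qquad h(W):=J_{\overline x_0}(W,0),\quad \Lambda:=\tfrac{1}{\beta^2\lambda}.
\]
Because $h(W)=S_{\overline x_0}(W)+\tfrac12\lVert W\rVert_2^2\ge\tfrac12\lVert W\rVert_2^2$, both integrals converge for every $\beta>0$, their tails decay like a Gaussian, and $\Lambda\to+\infty$ as $\beta\to0$.

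First I would localize around the minimizer. By uniqueness of the global minimum together with $h(W)\ge\tfrac12\lVert W\rVert_2^2$, for any fixed radius $\rho>0$ there is $\varepsilon>0$ with $h\ge h(U^\star_{\mathrm{det}})+\varepsilon$ outside $\mathcal{B}_\rho(U^\star_{\mathrm{det}})$, so the contribution of the complement of $\mathcal{B}_\rho(U^\star_{\mathrm{det}})$ to numerator and denominator is $\mathcal{O}(e^{-\Lambda\varepsilon})$, negligible against the local part. On $\mathcal{B}_\rho(U^\star_{\mathrm{det}})$ I would invoke the SOSC, i.e.\ $\nabla h(U^\star_{\mathrm{det}})=0$ and $H:=\nabla^2 h(U^\star_{\mathrm{det}})\succ0$, Taylor-expand $h$ to third order, and rescale $W=U^\star_{\mathrm{det}}+\Lambda^{-1/2}\xi$, which turns the exponent into $\tfrac12\xi^\top H\xi+\tfrac{\beta\sqrt\lambda}{6}\,D^3h(U^\star_{\mathrm{det}})[\xi,\xi,\xi]+\mathcal{O}\big(\beta^2\lVert\xi\rVert^4\big)$. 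This is exactly the regime of Lemma~\ref{lemma_laplace} — whose multivariate form follows, e.g., by diagonalizing $H$ and applying the scalar statement iteratively, or by passing to polar coordinates — so numerator and denominator admit asymptotic expansions in powers of $\Lambda^{-1/2}=\beta\sqrt\lambda$.

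The decisive step is the first-order correction. Expanding $e^{-\Lambda(h-h(U^\star_{\mathrm{det}}))}=e^{-\frac12\xi^\top H\xi}\big(1-\tfrac{\beta\sqrt\lambda}{6}D^3h(U^\star_{\mathrm{det}})[\xi,\xi,\xi]+\mathcal{O}(\beta^2(\cdots))\big)$ and integrating $\mathrm{d}\xi$, the $\mathcal{O}(\beta)$ contributions to both numerator and denominator are odd-order Gaussian moments — $\int\xi\,e^{-\frac12\xi^\top H\xi}\,\mathrm{d}\xi=0$ and $\int D^3h(U^\star_{\mathrm{det}})[\xi,\xi,\xi]\,e^{-\frac12\xi^\top H\xi}\,\mathrm{d}\xi=0$ — hence cancel (up to exponentially small tail corrections, since the rescaled ball exhausts $\mathbb{R}^{Nn_u}$). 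Thus the numerator equals $Z\big(U^\star_{\mathrm{det}}+\mathcal{O}(\beta^2)\big)$ and the denominator equals $Z\big(1+\mathcal{O}(\beta^2)\big)$ with $Z=\int e^{-\frac12\xi^\top H\xi}\,\mathrm{d}\xi=(2\pi)^{Nn_u/2}\lvert H\rvert^{-1/2}$, and taking the quotient yields $\tilde U^\star_{\mathrm{MPPI}}(\beta)=U^\star_{\mathrm{det}}+\mathcal{O}(\beta^2)$, which proves \eqref{eq_thm1} and \eqref{eq_thm2} at once.

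The hard part will be twofold. The first difficulty is making the multivariate Laplace expansion rigorous: one must justify term-by-term integration with remainder bounds uniform over $\mathcal{B}_\rho(U^\star_{\mathrm{det}})$ that survive division in the quotient — this is also where regularity beyond the stated $C^1$ is really needed (effectively the analyticity underlying Lemma~\ref{lemma_laplace}, or at least $C^3$ with a locally Lipschitz third derivative, so that the Taylor remainder is genuinely $\mathcal{O}(\lVert\xi\rVert^4/\Lambda^2)$). The second, conceptually more important, difficulty is the parity bookkeeping: it is precisely the cancellation of the $\Lambda^{-1/2}$ (i.e.\ $\mathcal{O}(\beta)$) term — a consequence of the Gaussian weight being even while the leading nonquadratic term of $h$ is odd — that upgrades the generic $o(\beta)$ behaviour of a Laplace barycenter to the claimed quadratic rate $\mathcal{O}(\beta^2)$.
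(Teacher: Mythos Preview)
Your approach is essentially the paper's: both rewrite $q^\star_\beta$ as a Gibbs density $\propto\exp(-\Lambda\, J_{\overline x_0}(W,0))$ with $\Lambda=(\beta^2\lambda)^{-1}$, express $\tilde U^\star_{\mathrm{MPPI}}(\beta)$ as a ratio of Laplace-type integrals, and expand via Laplace's method. The execution differs. The paper first treats the scalar case $Nn_u=1$, substitutes $\zeta=\beta^2$, and invokes Lemma~\ref{lemma_laplace} as a black box on numerator and denominator separately to obtain $\tilde U^\star_{\mathrm{MPPI}}(\zeta)=(U^\star_{\mathrm{det}}\alpha_1\zeta^{1/2}+\alpha_2\zeta^{3/2}+\cdots)/(\alpha_1\zeta^{1/2}+\alpha_3\zeta^{3/2}+\cdots)$, with the absence of integer powers of $\zeta$ attributed tersely to ``SOSC at $U^\star_{\mathrm{det}}$''; the multivariate case is then reduced to one-dimensional marginals $\hat\Xi(w_i,\beta)$ and the scalar argument re-applied componentwise. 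Your route is more explicit on the two points the paper leaves implicit: you spell out \emph{why} the $\mathcal{O}(\beta)$ term cancels (odd Gaussian moments vanish against the even weight $e^{-\frac12\xi^\top H\xi}$), and you treat the multivariate case in one pass via the rescaling $W=U^\star_{\mathrm{det}}+\Lambda^{-1/2}\xi$ rather than through marginals---which avoids the question of whether $\hat\Xi(w_i,\beta)$ is itself of Laplace form. You also correctly flag that the stated $C^1$ hypothesis is not enough for a third-order Taylor remainder; the paper sidesteps this by leaning on the analyticity assumption baked into Lemma~\ref{lemma_laplace}.
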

 \begin{proof}
We start with the \textbf{{scalar case}} $Nn_u=1$ and then extend this to the multivariate case. Replacing $\lambda$ and $\overline \Sigma$ by their scaled versions
, we can rewrite \eqref{eq_opt_dis} as
 \begin{align}\label{eq_opt_dis2}
    q_\beta^\star(W):=&\frac{1}{\eta(\beta)} \exp \left( -\frac{1}{\beta^2\lambda } S_{\overline x_0} (W) \right) p_{\beta^2\overline{\Sigma}}(W),\\
    =&\frac{\exp \left( -\frac{1}{\beta^2 \lambda } \left[S_{\overline x_0} (W) + \frac{\lambda}{2} \sum_{k=0}^{N-1} w_k^\top \Sigma^{-1} w_k \right]  \right)}{\eta(\beta)Z(\beta)} ,\notag  \\
    =& \frac{1}{\tilde \eta (\beta)} \exp \left( -\frac{1}{ \beta^2 \lambda  }  J_{\overline x_0}(W,0) \right)\notag ,
\end{align}
where in the second line, the exponential kernel of the multivariate normal distribution is included in the exponent, and its normalization factor is denoted by $Z(\beta)$. In the third line, the term in square brackets is identified as the overall costs \eqref{eq_SOCP}, and the normalization factors are combined to $\tilde{\eta}(\beta)$.  
Based on this representation of the PDF of the optimal distribution, the MPPI solution \eqref{eq_u_star} is given by 
\begin{align}
    \tilde U^\star_{\mathrm{MPPI}}(\beta)&=\mathbb{E}_{W \sim \mathbb{Q}^\star(\beta)} [W] \notag\\
     &=\int_\mathbb{R} W q_\beta^\star(W)\: \mathrm{d} W \notag \\
    &=\frac{\int_\mathbb{R} W \exp \left( -\frac{1}{\beta^2\lambda  } J_{\overline x_0} (W,0) \right) \mathrm{d} W}{\int_\mathbb{R} \exp \left( -\frac{1}{\beta^2\lambda  } J_{\overline x_0} (W,0) \right) \mathrm{d} W},\label{eq_xxx}
\end{align}
where the second line is the explicit representation of the expected value and in the third line $q^\star_\beta (W)$ is substituted.   
For small $\beta>0$, both integrals are dominated by the neighborhood of $W=U^\star_\mathrm{det}$, motivating to apply Laplace's method.
Substituting $\beta^2=\zeta$ in \eqref{eq_xxx} and applying Laplace's method from Lemma~\ref{lemma_laplace} separately to the numerator and the denominator result in
\begin{equation*}
   \tilde U^\star_\mathrm{MPPI}(\zeta)  =  \frac{U^\star_\mathrm{det}\alpha_1 {\zeta}^\frac{1}{2} + \alpha_2 \zeta^{\frac{3}{2}}+\mathcal{O}\left( \zeta^{\frac{5}{2}} \right)}{\alpha_1 {\zeta}^\frac{1}{2}+\alpha_3 {\zeta}^\frac{3}{2}+\mathcal{O} \left( \zeta^{\frac{5}{2}} \right)}, 
\end{equation*}
where the structure follows from SOSC at $U^\star_\mathrm{det}$ and $\alpha_1,\alpha_2,\alpha_3$ are scalar constants.
The re-substitution of $\zeta=\beta^2$ and a subsequent Taylor expansion at $\beta=0$ yields
\begin{equation*}
    \lVert \tilde U^\star_\mathrm{MPPI}(\beta) -U^\star_\mathrm{det} \rVert = \frac{|\alpha_2-U^\star_\mathrm{det}\alpha_3|}{|\alpha_1|}  \beta^2 +\mathcal{O}\left(\beta^4\right)=\mathcal{O}\left( \beta^2 \right).
\end{equation*}
In the \textbf{multivariate case} of \eqref{eq_xxx} the MPPI solution for all scalar elements of the solution trajectory $i=0,..,\hat N-1$ with $\hat N:=Nn_u$ is given by 
\begin{align}
    \tilde U^\star_{\mathrm{MPPI},i}(\beta)&=\mathbb{E}_{w_i \sim \mathbb{Q}^\star(\beta)} [w_i] \notag\\
     & =\frac{\int_{\mathbb{R}} \cdots \int_{\mathbb{R}}\int_{\mathbb{R}}\; w_i\; \Xi(W,\beta)\; \mathrm{d} w_0\mathrm{d} w_1\cdots \mathrm{d} w_{\hat N-1}}{\int_{\mathbb{R}} \cdots \int_{\mathbb{R}}\int_{\mathbb{R}}\; \Xi(W,\beta) \mathrm{d} w_0\mathrm{d} w_1\cdots \mathrm{d} w_{\hat N-1}}\notag \\
     & =\frac{ \int_{\mathbb{R}}  w_i\;  \hat \Xi(w_i,\beta)  \mathrm{d} w_{i}}{\int_{\mathbb{R}}  1 \; \hat \Xi(w_i,\beta)  \mathrm{d} w_{i}}, \label{eq_marginal}
    \end{align}
    where $\Xi(W,\beta):=\exp \left( -\frac{1}{\beta^2\lambda  } J_{\overline x_0} (W,0) \right)$ and $\hat \Xi(w_i,\beta)$ is the marginal PDF for $i=0,..,\hat N-1$. 
     The marginal PDF \eqref{eq_marginal} is similar to \eqref{eq_xxx} that we already treated via Laplace's Method. The same argumentation results in
     \begin{equation*}
    \lVert \tilde U^\star_{\mathrm{MPPI},i}(\beta) -U^\star_{\mathrm{det},i} \rVert = \gamma_i  \beta^2 +\mathcal{O}\left(\beta^4\right)=\mathcal{O}\left( \beta^2 \right)
\end{equation*}
that holds for all scalar elements of input trajectory with index  $i=0,1,...,\hat N-1$ with $\gamma_i\in\mathbb{R}^+_0$ and
\begin{equation}
    \lVert \tilde U^\star_{\mathrm{MPPI}}(\beta) -U^\star_{\mathrm{det}} \rVert = \mathcal{O}\left( \beta^2 \right)
\end{equation}
follows.
 \end{proof}

\begin{corollary}\label{cor}
The corresponding optimal value function is reached in the limit and shrinks with fourth-order,
\begin{equation}\label{eq_order_value}
  J_{\overline x_0}(\tilde U^\star_\mathrm{MPPI}(\beta),0)-V_\mathrm{det}(\overline x_0)=  \mathcal{O} \left(\beta^4\right).
\end{equation}
\end{corollary}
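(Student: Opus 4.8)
The plan is to combine first-order optimality at $U^\star_\mathrm{det}$ with the control-trajectory bound \eqref{eq_thm2} already established in Thm.~\ref{thm_main}. Since the DET-OCP \eqref{eq_DOCP} is unconstrained and $U^\star_\mathrm{det}$ is its unique global minimizer, the first-order necessary condition gives $\nabla_U J_{\overline x_0}(U^\star_\mathrm{det},0)=0$, while the SOSC invoked in Thm.~\ref{thm_main} supplies the Hessian $H:=\nabla^2_U J_{\overline x_0}(U^\star_\mathrm{det},0)$ (positive definite on the relevant directions, though only its boundedness will be needed for the upper bound). I would then Taylor-expand the nominal cost about the deterministic optimum: for $U$ in a fixed neighborhood of $U^\star_\mathrm{det}$,
\begin{equation*}
  J_{\overline x_0}(U,0) - V_\mathrm{det}(\overline x_0) = \tfrac{1}{2}\,(U-U^\star_\mathrm{det})^\top H\,(U-U^\star_\mathrm{det}) + \mathcal{O}\!\left(\lVert U-U^\star_\mathrm{det}\rVert^{3}\right),
\end{equation*}
where the linear term vanishes by stationarity and $V_\mathrm{det}(\overline x_0)=J_{\overline x_0}(U^\star_\mathrm{det},0)$ by definition.

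Next I would substitute $U=\tilde U^\star_\mathrm{MPPI}(\beta)$. By \eqref{eq_thm1} this argument lies in any prescribed neighborhood of $U^\star_\mathrm{det}$ once $\beta$ is small enough, so the expansion applies; and by \eqref{eq_thm2} we have $\lVert \tilde U^\star_\mathrm{MPPI}(\beta)-U^\star_\mathrm{det}\rVert=\mathcal{O}(\beta^{2})$. Hence the quadratic form is $\mathcal{O}\!\left((\beta^{2})^{2}\right)=\mathcal{O}(\beta^{4})$ and the cubic remainder is $\mathcal{O}\!\left((\beta^{2})^{3}\right)=\mathcal{O}(\beta^{6})$, which is absorbed, yielding \eqref{eq_order_value}. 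The limit part of the claim — that the optimal value is attained as $\beta\to 0$ — follows immediately from continuity of $J_{\overline x_0}(\cdot,0)$ together with \eqref{eq_thm1}. If desired, one can make the leading term explicit: writing $\tilde U^\star_\mathrm{MPPI}(\beta)-U^\star_\mathrm{det}=\beta^{2}d+\mathcal{O}(\beta^{4})$ with the coefficient vector $d$ built from the $\gamma_i$ of the proof of Thm.~\ref{thm_main}, one gets $J_{\overline x_0}(\tilde U^\star_\mathrm{MPPI}(\beta),0)-V_\mathrm{det}(\overline x_0)=\tfrac{1}{2}\beta^{4}\,d^\top H\,d+\mathcal{O}(\beta^{6})$.

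The main obstacle is a smoothness bookkeeping point: the second-order Taylor expansion with an $\mathcal{O}(\lVert\cdot\rVert^{3})$ remainder needs $J_{\overline x_0}(\cdot,0)$ to be twice continuously differentiable near $U^\star_\mathrm{det}$, whereas the hypotheses of Thm.~\ref{thm_main} only state $C^{1}$ plus SOSC. I would resolve this either by strengthening the standing regularity assumption to $C^{2}$ (consistent with the analyticity already exploited via Lemma~\ref{lemma_laplace} in the proof of Thm.~\ref{thm_main}) or by deriving the quartic bound directly from the integral representation \eqref{eq_xxx} by the same Laplace-type asymptotics. A secondary, benign point worth recording is that $J_{\overline x_0}(U,0)\ge V_\mathrm{det}(\overline x_0)$ for all $U$, so the difference in \eqref{eq_order_value} is nonnegative and the $\mathcal{O}(\beta^{4})$ estimate is genuinely a one-sided (upper) bound, with no cancellation subtleties.
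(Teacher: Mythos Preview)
Your proposal is correct and follows essentially the same route as the paper: a second-order Taylor expansion of $J_{\overline x_0}(\cdot,0)$ at $U^\star_\mathrm{det}$, the vanishing first-order term by stationarity, and then insertion of the $\mathcal{O}(\beta^2)$ control-trajectory bound from Thm.~\ref{thm_main} to obtain $\mathcal{O}(\beta^4)$. Your version is in fact more carefully written than the paper's---you correctly state the remainder as $\mathcal{O}(\lVert U-U^\star_\mathrm{det}\rVert^3)$ rather than $\mathcal{O}(\lVert U\rVert^3)$, and your observation about the $C^2$ regularity gap is a legitimate bookkeeping point that the paper does not address explicitly.
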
 
\begin{proof}
    Due to the continuously differentiable setting, the continuous differentiability of the optimal value function is preserved \cite{Messerer2024}.
    Therefore, we can use the Taylor series
    \begin{equation}\label{eq_col}
        J_{\overline x_0}(U,0)-V_\mathrm{det}(\overline x_0)=  \frac{1}{2} \Delta U^\top \nabla^2_U J_{\overline x_0} (U^\star_\mathrm{det},0) \Delta U + \mathcal{O} \left( \lVert U\rVert ^3 \right),
    \end{equation}
    where $\Delta U:= U-U^\star_\mathrm{det}$ denotes the difference to the deterministic solution trajectory.
    Since, with \eqref{eq_thm2}, the error of the computed control trajectory shrinks with $ \mathcal{O}\left( \beta^2 \right) $, and with  \eqref{eq_col} there exists a local quadratic model of the overall cost function, the uncertainty manifests itself with fourth-order in the value function. 
\end{proof}
\section{Alternatives and Computational Effort}\label{sec_dis}
 In comparison to other\textbf{ Monte Carlo optimization} approaches to solve the DET-OCP like reward-weighted regression \cite{Peters2007}, cross-entropy methods \cite{Botev.2013}, covariance matrix adaptation evolution strategy (CMA-ES) \cite{Hansen2006}, \texttt{MPPI-Generic} \cite{Vlahov.2024}, or \texttt{CoVo-MPC} \cite{Yi.2024}, the deterministic MPPI method considered in this paper is characterized by the predefined shrinking rate of the sampling covariance favorable for real-time applications \cite{Halder2025}. However, although it is not part of this study, it should be noted that Monte Carlo approaches often require excessive computational effort.\\
 Methods of \textbf{Newton-type optimization} can be tailored to solve DET-OCPs efficiently \cite[Sec.~8]{Rawlings2017}. To apply Newton-type methods on the OLS-OCP or the CLS-OCP, an approximation of the problem is typically required. 
This takes the form of either sampling the uncertainty vector, cf. e.g., the scenario approach \cite{Calafiore2006} and the sample average approximation \cite{Shapiro2009}, or based on linearization, cf. \cite{Nagy2003a}. 
For the CLS-OCP, the latter approach requires e.g. a policy parametrization such as linear state  \cite{Nagy2004} 
or disturbance feedback \cite{ Goulart2006}.\\ 
 Since the problem \eqref{eq_det_OCP} solved by deterministic MPPI follows a single-shooting formulation \cite[Sec. 8.5.1]{Rawlings2017}, which is commonly addressed using Newton-type optimization methods, we compare both approaches in terms of \textbf{computational effort}. The computational cost of evaluating a function \( F \) is denoted as \( \mathrm{cost}(F) \). Due to the parallel nature of sampling, particularly when executed on a GPU \cite{Vlahov.2024}, the computational cost of Algo.~\ref{alg:DeterministicSTLMPPISolver} is given by  
$  \mathrm{cost}(\text{Algo.}~\ref{alg:DeterministicSTLMPPISolver}) \approx I \cdot \mathrm{cost}(S_{\overline{x}_0}),$
as the evaluation of \( S_{\overline{x}_0} \) is computationally expensive. In contrast, the cost of computing the gradient of \( S_{\overline{x}_0} \) using the reverse mode of algorithmic differentiation (AD) is  
$    \mathrm{cost}(\nabla_U S_{\overline{x}_0}) \approx 3 \cdot \mathrm{cost}(S_{\overline{x}_0}) $
\cite[Sec. 8.4.5]{Rawlings2017}. Even Newton-type methods require multiple iterations, leading to an effort of   
   $ \approx3K \cdot \mathrm{cost}( S_{\overline{x}_0})$ to compute only the gradients,
where \( K \) denotes the number of iterations.
From Thm.~\eqref{thm_main} in combination with the exponential shrinking of $\beta$ (Algo.~\ref{alg:DeterministicSTLMPPISolver};~Line~\ref{eq:Det_MPPI_Shrinking}) a \textit{q-linear} convergence follows, while advanced Hessian approximations can lead to faster convergence rates of Newton-type methods restricted to the neighborhood of the optimum  \cite[Thm. 8.7]{Rawlings2017}.






 \begin{figure}
 \vspace{0.3cm}
    \centering
    \includegraphics[width=0.85\linewidth]{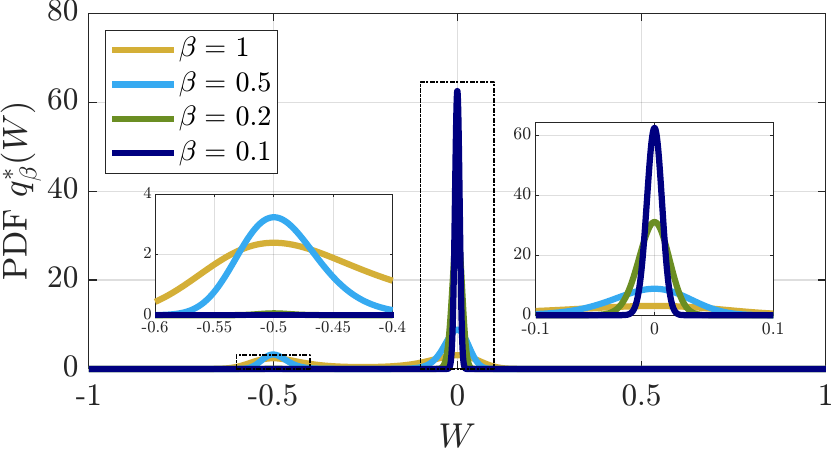}
    \caption{PDF of optimal distribution $\mathbb{Q}^\star_\beta(W)$ for different $\beta$. The two additional interior plots show zoomed sections of the same PDF.}
    \label{fig:PDF}
    \vspace{-0.5cm}
\end{figure}
\begin{figure}[b]
    \centering
    \includegraphics[width=0.95\linewidth]{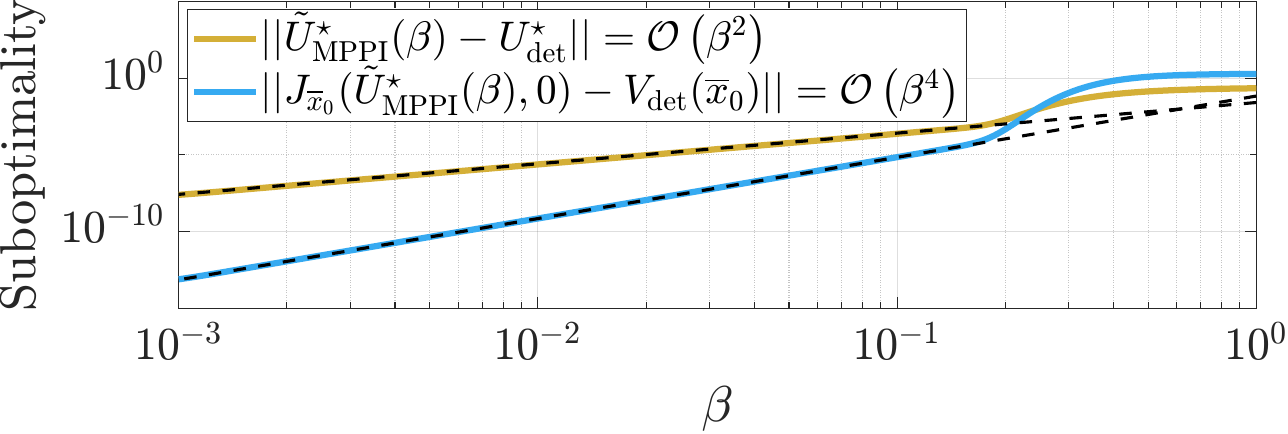}
    \caption{Suboptimality of the MPPI solution to the DET-OCP of the controls (gold) and the value function (blue). }
    \label{fig:SCALAR}
\end{figure}

\section{Numerical Experiments}\label{sec_exp}
  In this section, we provide two numerical experiments, while more complex DET-OCPs and additional technical details are presented in our previous work~\cite{Halder2025}, although without a theoretical investigation of the method's properties. \\
A) Consider the scalar polynomial value function 
   $ J(U)=\frac{1}{24}c_4 U^4+\frac{1}{6}c_3 U^3+\frac{1}{2}c_2 U^2$
with $U\in\mathbb{R}$ and $c_2,c_3,c_4\in\mathbb{R}$ selected to determine a global minimum at $J(0)=0$, a local minimum at $J(-\frac{1}{2})=\frac{1}{4}$, and the local maximum $J(-\frac{1}{4})=2$.
The PDF of the optimal distribution of this function is plotted in Fig.~\ref{fig:PDF} for different values of $\beta>0$. The suboptimality of the MPPI solution is plotted as a function of $\beta$ in Fig.~\ref{fig:SCALAR}.  
For small $\beta$, the influence of the local minimum at $W=-0.5$ vanishes in Fig.~\ref{fig:PDF} and the corresponding progressions of the error of the MPPI solution show the theoretically predicted behavior.  \\
B) Consider the input-affine dynamics $f_\mathrm{af}=x-\frac{1}{2}\sin(3x)+u+w$ and the nonlinear dynamics $f_\mathrm{nl}=x+\arctan (u+w)$ with $x,u,w\in\mathbb{R}$, where $w\sim\mathcal{N}\left( 0,\beta^2\Sigma \right)$, the stage cost function
$L(x,u)=\frac{1}{2}Ru^2$, the asymmetric terminal cost function $E(x)=(x-1)^6+x$, and a horizon with $N=2$ steps. 
For both settings, the overall cost function has the structure
\begin{equation}\label{eq_ex2}
    J_{\overline x_0}(U,W)=  L(  x_0(\cdot ),u_0) +  L(  x_1(\cdot ),u_1) + E( x_2(\cdot ))
\end{equation} 
and can be substituted in the CLS-OCP formulation \eqref{eq_CL_OCP}, the OLS-OCP formulation \eqref{eq_SOCP}, and the nominal DET-OCP formulation \eqref{eq_DOCP}. 
 Note that the CLS-OCP solution is visualized as a set because $u_1$ is a policy in this setting.
 The solutions are plotted with the $I=10$ iterates of deterministic MPPI in Fig.~\ref{fig:OCP}. We choose $\overline x_0=-1$, $\hat u_0=\hat u_1=0$, $\nu=\frac{\sqrt 2}{2}$, and a sufficient number of samples. The solutions differ significantly due to the asymmetry of the problem. 
 As considered in Section~\ref{sec_MPPI_CLSOCP}, only in the case of input-affine dynamics, standard MPPI provides a solution to the CLS-OCP. While, in both cases, standard MPPI provides weak approximations to the OLS-OCPs, the iterates of deterministic MPPI converge \textit{q-linearly} to the D-OCP solution.
 
\begin{figure}
\vspace{0.3cm}
    \centering
    \includegraphics[width=0.92\linewidth]{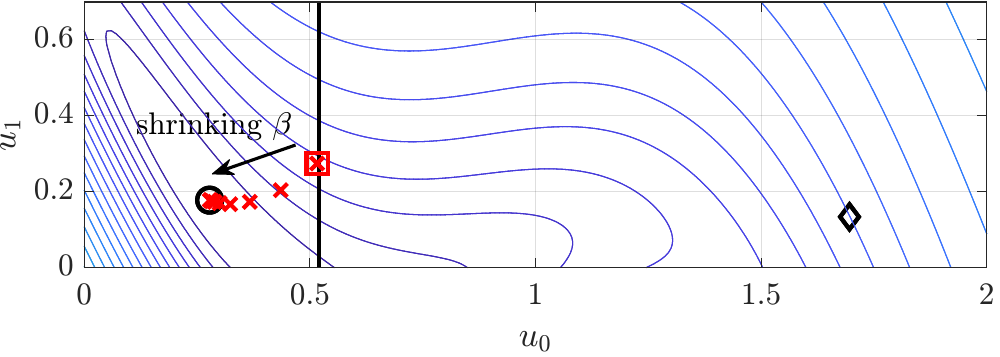}
    \includegraphics[width=0.93\linewidth]{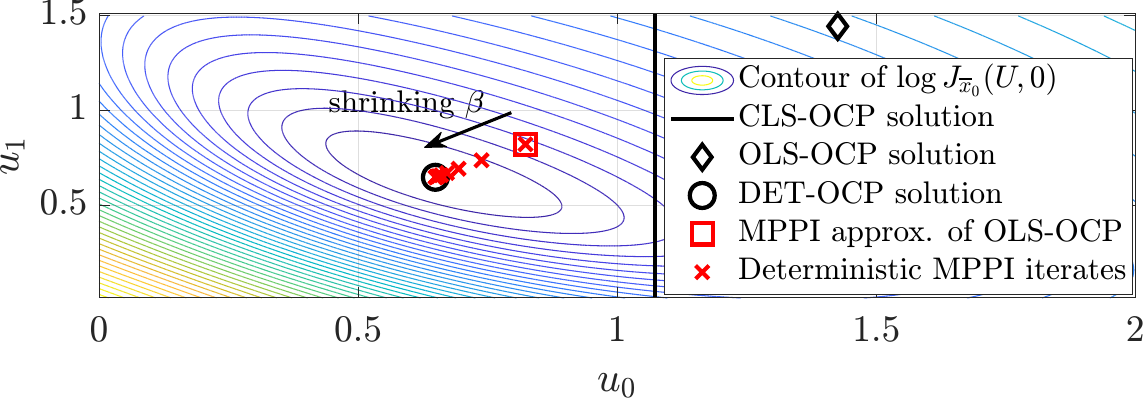}
    \caption{Numerical solutions to the three different problems DET-OCP, OLS-OCP, and CLS-OCP (black), all with overall cost function \eqref{eq_ex2} based on input-affine dynamics $f_\mathrm{af}$ (top) and nonlinear dynamics $f_\mathrm{nl}$ (bottom) alongside the iterates of \textit{deterministic MPPI}.}
    \label{fig:OCP}
    \vspace*{-0.4cm}
\end{figure}

\section{Summary}\label{sec_con}
This letter provides a brief review of different stochastic OCP formulations and their solution methods from the MPPI control framework and simplifies their notation. Further, we show that in a smooth and unconstrained setting the suboptimality of the MPPI solution to deterministic OCPs shrinks smoothly as the level of remaining exploration uncertainty decreases. This proves the convergence of the previously presented deterministic MPPI algorithm and shows that the suboptimality of the MPPI solution can be modulated by appropriately tuning hyperparameters.

\section*{Acknowledgment}
The authors would like to thank Lothar Kiltz for asking about the convergence of deterministic MPPI and Katrin Baumgärtner for valuable discussions on Laplace's method.

\addtolength{\textheight}{-12cm}   







%
%
%
%
%
%
\bibliographystyle{IEEEtran}

\end{document}